\begin{document}
\title{Optimizing XML Compression}
\subtitle{(Extended Version)}
\author{Gregory Leighton \and Denilson Barbosa}
\institute{University of Alberta\\Edmonton, AB, Canada\\ \email{\{gleighto;denilson\}@cs.ualberta.ca}}

\maketitle

\begin{abstract}
The eXtensible Markup Language (XML) provides a powerful and flexible means of encoding and exchanging data. As it turns out, its main advantage as an encoding format (namely, its requirement that all open and close markup tags are present and properly balanced) yield also one of its main disadvantages: verbosity. XML-conscious compression techniques seek to overcome this drawback. Many of these techniques first separate XML structure from the document content, and then compress each independently. Further compression gains can be realized by identifying and compressing together document content that is highly similar, thereby amortizing the storage costs of auxiliary information required by the chosen compression algorithm. Additionally, the proper choice of compression algorithm is an important factor not only for the achievable compression gain, but also for access performance.  Hence, choosing a compression configuration that optimizes compression gain requires one to determine (1) a partitioning strategy for document content, and (2) the best available compression algorithm to apply to each set within this partition.  In this paper, we show that finding an optimal compression configuration with respect to compression gain is an \textbf{NP}-hard optimization problem. This problem remains intractable even if one considers a single compression algorithm for all content. We also describe an approximation algorithm for selecting a partitioning strategy for document content based on the branch-and-bound paradigm.
\end{abstract}

\section{Introduction}
\label{sec:introduction}

The \textit{eXtensible Markup Language (XML)} has become increasingly popular as a data encoding format. XML has many benefits, but one notable weakness: its verbosity, resulting from the high markup-to-content ratio imposed in large part by requiring every markup tag to be properly closed. The increasing size of XML datasets has motivated researchers to seek ways to reduce storage costs by applying compression techniques. Because XML is inherently a textual format, the naive solution is to apply a generic text compression scheme. However, such schemes are not aware of XML syntax, and therefore cannot easily exploit redundancies in the tree structure unambiguously induced by the proper nesting of markup tags inside the XML document (such as repeated subtrees), or even distinguish an element tag from a text segment. Thus, such a strategy severely hinders query processing, which is fundamentally based on traversing the structure of the document.

With such shortcomings in mind, many {\em XML-conscious} compression techniques have been proposed in recent years. Among them, {\em homomorphic approaches} to XML compression (e.g.,~\cite{Adiego05,Cheney01,Cheney05,Leighton05b,Min03,Tolani02}) preserve the original tree structure in the compressed representation by processing each node as it occurs during a pre-order traversal. {\em Permutation-based approaches} (e.g.,~\cite{XQueC07,Leighton05a,Liefke00,Maneth08,Skibinski08}) re-arrange the document before performing compression, in an attempt to group ``similar'' nodes together and therefore improve the achievable compression rate. A commonly used permutation strategy treats structure separately from content, and then applies a partitioning strategy to group content nodes into a series of {\em data containers}. However, there is an inherent tradeoff between the achievable compression rate and access performance: in general, better compression tends to occur by grouping large sets of nodes together before compression, yet such a strategy will often hurt access time by increasing the number of decompression operations needed to extract relevant document fragments. 

In this paper, we focus on the permutation-based approaches, and seek to determine the complexity of determining optimal strategies for {\em container grouping} and {\em compression algorithm selection} such that the resulting {\em compression configuration} maximizes the overall compression gain, while keeping compression and/or decompression time and compression model storage requirements within specified bounds. Arion et al~\cite{XQueC07} were the first to investigate (albeit informally) the tradeoff between compression rate and query performance, given a set of typical queries, a set of available compression algorithms, and a specific XML database as inputs. We consider a more general setting that captures the problem outlined in~\cite{XQueC07} as well as additional application domains, including data archiving and data exchange. We provide a complexity analysis indicating that the difficulty of selecting an optimal compression configuration is {\bf NP}-hard, and also describe an approximation algorithm based on a branch-and-bound technique that finds the optimal compression configuration within polynomial time (w.r.t. the document size and the number of available compression algorithms), with the choice of appropriate parameter values. 

The paper is structured as follows. Section~\ref{sec:preliminaries} provides preliminary definitions and a background into the problem. Section~\ref{sec:analysis} investigates the difficulty of choosing an optimal tradeoff between compression gain and query performance. Section~\ref{sec:algorithm} describes an approximation algorithm for choosing a near-optimal compression configuration, while Section~\ref{sec:conclusion} concludes the paper and outlines our future work.

\section{Preliminaries}
\label{sec:preliminaries}
\subsection{XML Data Model}
\label{subsec:XMLModel}
We recall that an XML document can be represented as a rooted, ordered, labeled tree (the {\em document tree}), in which the leaf nodes correspond to attribute values and text segments (document content), while the interior nodes represent attributes and elements (document structure). According to convention, we distinguish attribute names from elements by prepending the former with `@'. As an illustrative example application, we consider a social recommendation website, where users share their opinions of movies, music, etc. with other users.  Additionally, users assign a prestige to other users, allowing them to express their evaluation of the quality of those users' recommendations. User account data is stored as XML; Fig.~\ref{fig:docTree} shows a fragment of the document tree.  

Query languages for XML center around {\em path expressions}, which are used to specify subsets of nodes within the document tree.  The two most influential XML query languages are XPath~\cite{XPath07} and XQuery~\cite{XQuery07}. 

\begin{example}
\label{ex:XQuery}
For the example document tree of Fig.~\ref{fig:docTree}, the following XQuery returns the titles of movies rated at least 4.5 by users with a prestige ranking lower than 4.

\begin{verbatim}
let $movies := for $user in doc(``ratings.xml'')//user 
   where $user/prestige lt 4.0
   return $user/favorites/movies/movie
   
for $movie in $movies
	where $movie/rating ge 4.5
	return $movie/title
\end{verbatim}
This query returns {\tt <title>Smoke</title>}.
\end{example}

\begin{figure}[t]
\centering
\includegraphics[scale=0.54]{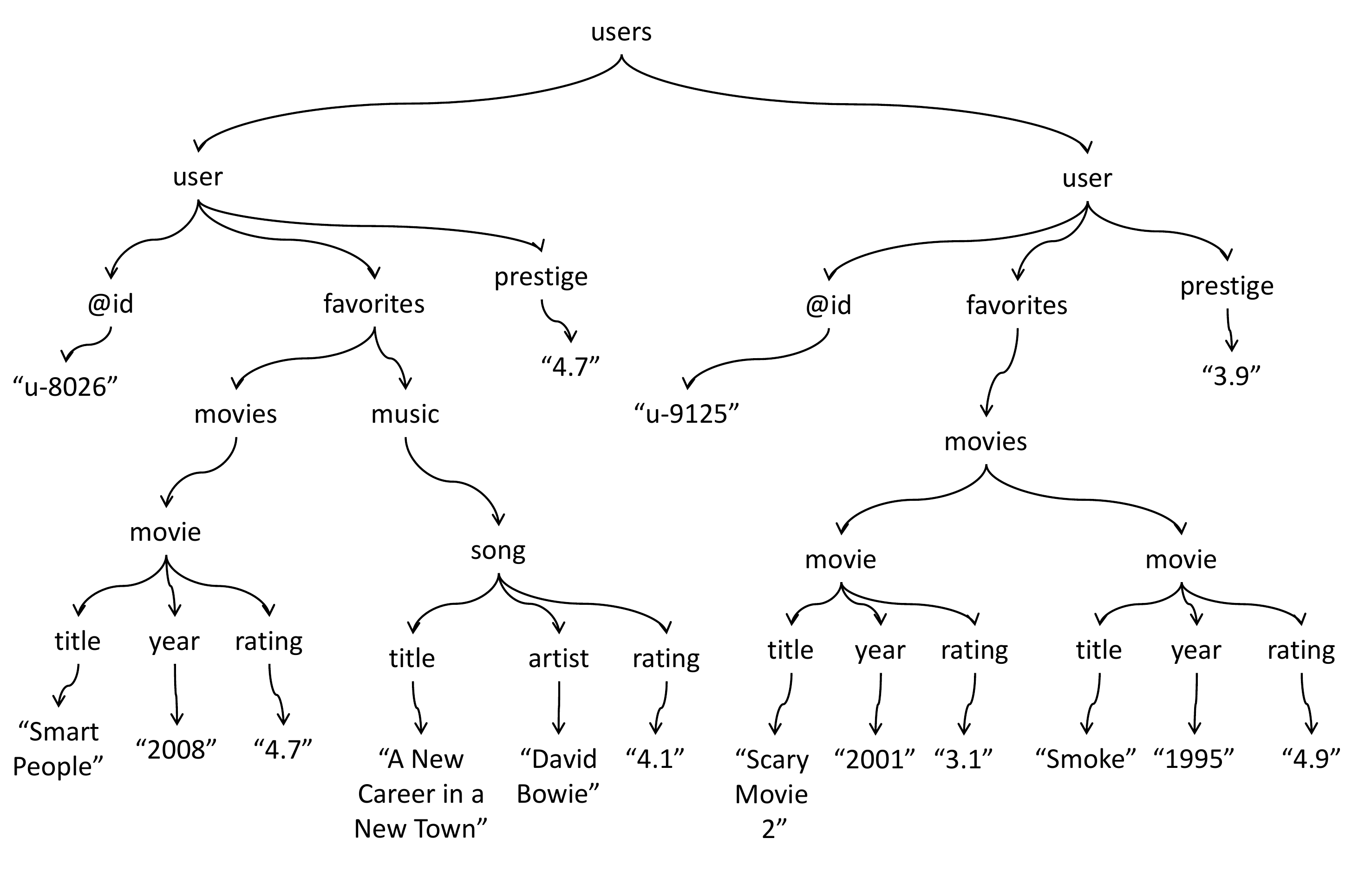}
\caption{Example XML document tree.}
\label{fig:docTree}
\end{figure}

\subsection{XML Compression}
\label{subsec:XMLCompression}

Permutation-based strategies for XML-conscious compression separately compress the document structure and text content.  The textual content is organized into containers, usually based on the path (or just the name) of the parent element. The intuition for doing so is that values belonging to different instances of the same element are likely to exhibit similarities that facilitate compression. Fig.~\ref{fig:defaultGrouping} shows the default path-based partitioning of the text content of the document tree in Fig.~\ref{fig:docTree}, in which data values belonging to each distinct element and attribute type stored in a separate container.  

\begin{figure}[t]
\centering
\includegraphics[scale=0.54]{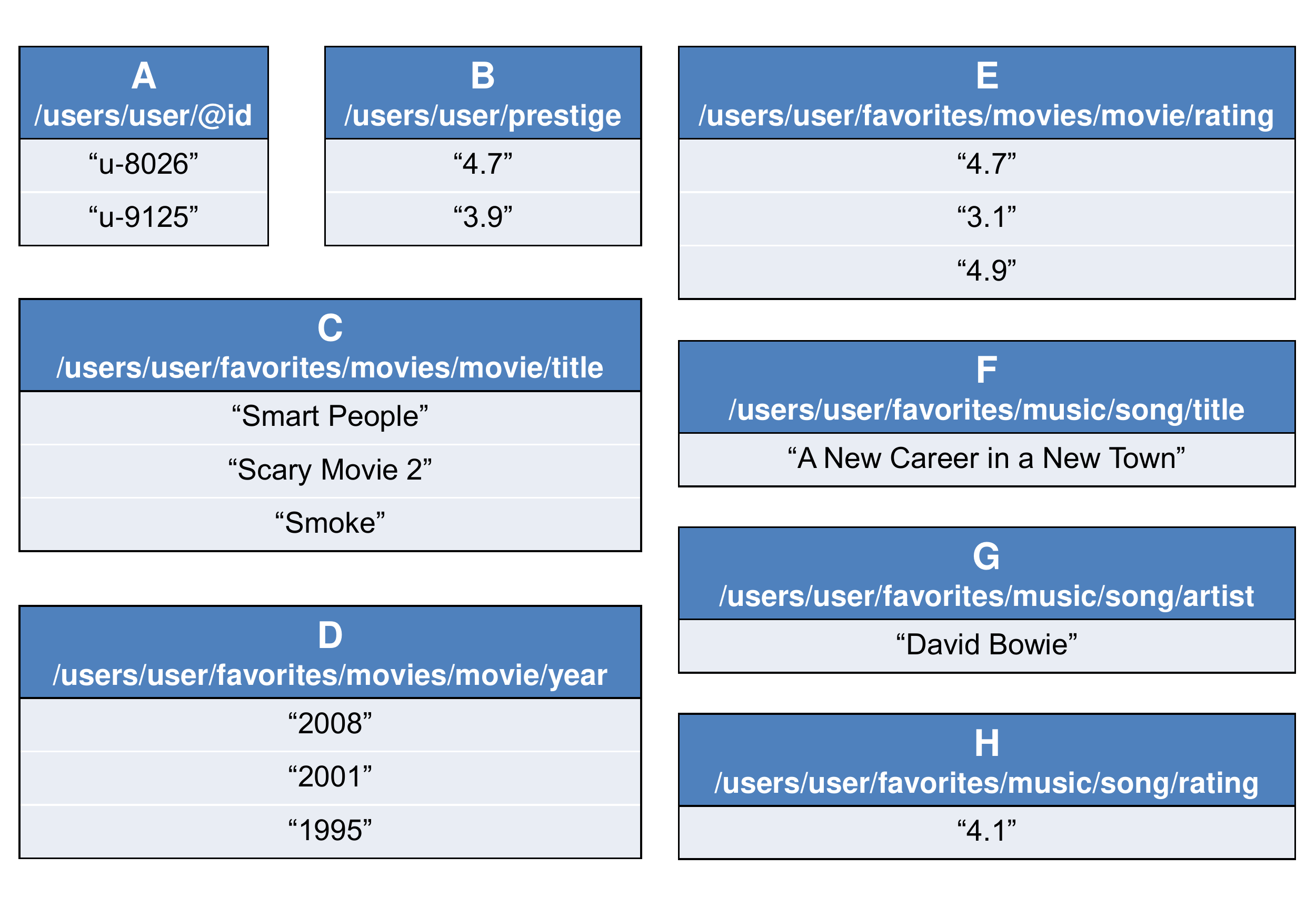}
\caption{A path-based partitioning of data values from the document of Fig.~\ref{fig:docTree}.}
\label{fig:defaultGrouping}
\end{figure}

Further compression gains can often be realized by generalizing the partitioning strategy to take into account additional factors, such as the data type of the content (e.g., integers, dates, and strings).  Grouping together multiple containers with high pairwise similarity allows the containers to share the same compression source model, reducing storage costs while simultaneously allowing more complex models over the longer sequence to be built.  Fig.~\ref{fig:chosenConfig} depicts a logical partitioning strategy that extends the default strategy from Fig.~\ref{fig:defaultGrouping}. Here, containers B, E, and H are grouped together, since user prestige, movie ratings, and song ratings are highly similar (i.e.,~they all consist of a real number value in the range [0.0, 5.0]).  Similarly, since the titles of movies and songs and artist names are all free-form text, it may prove beneficial to group together containers C, F, and G.

\begin{figure}[t]
\centering
\includegraphics[scale=0.54]{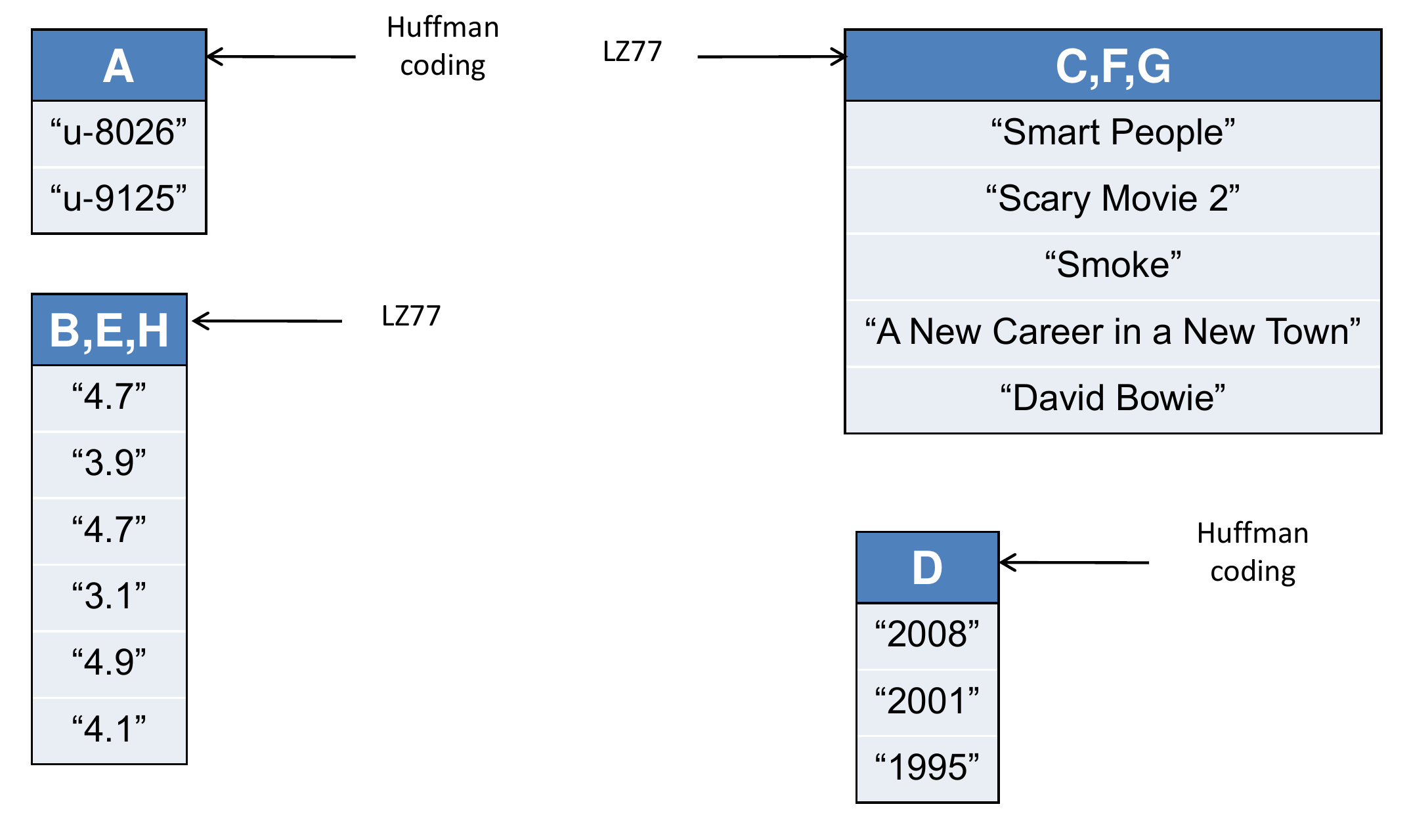}
\caption{A compression configuration for the document in Fig.~\ref{fig:docTree}.}
\label{fig:chosenConfig}
\end{figure}

Yet compression gain is often not the only criterion guiding the selection of a container grouping strategy. The choice of a partitioning strategy can also impact the efficiency of random access to nodes within the document tree.  In particular, query performance can be improved by choosing a partitioning strategy that places data segments involved in a common query within the same container subset.  Doing so can dramatically reduce the number of required decompression operations.  For Ex.~\ref{ex:XQuery}, a beneficial partitioning strategy might instead group together containers B, C, and E.

Proper algorithm selection is also an important factor to consider.  Greater compression can be realized by choosing a compression algorithm that is well-suited for the type of data values stored in a container subset.  Query performance is also impacted by the choice of compression algorithm, as the time required to carry out decompression adds to the query response time.  Fig.~\ref{fig:chosenConfig} additionally assigns a compression algorithm to each container subset (in this case, either LZ77 or Huffman coding).

Furthermore, certain compression algorithms allow classes of operations to be carried out without prior decompression; the choice of such an algorithm can therefore speed up query performance.  For example, using an order-preserving algorithm to compress user prestige and movie rating values would allow the comparisons in both {\tt where} clauses of the XQuery in Ex.~\ref{ex:XQuery} to be computed within the compressed domain, without requiring the decompression of each such value beforehand.

\subsection{XML Compression Measures}
\label{subsec:XMLCompressionMeasures}
We now consider the relevant measures used to evaluate solutions to XML compression problems, and discuss the inherent trade-offs between these values.

{\em Storage gain} measures the relative amount of space saved by applying a compression algorithm $a$ to a container $C$, denoted as $gain(C,a)$.  An effective measure must not only account for the size of the compressed representation of $C$; it must also consider the additional space required to store auxilary data structures constructed by the compression source model (e.g.,~for the Huffman algorithm, this would indicate the size of the generated tree; in dictionary-based compression schemes, it would represent the size of the dictionary).  It is calculated as

\begin{equation}
\label{eq:gain}
gain(C,a) = 1 - \frac{\mbox{compressed size of }C + \mbox{compression model size}}{\mbox{original size of }C}\enspace .
\end{equation} 

This measurement is also applicable to {\em sets} of containers; given a subset $S \subseteq \mathcal{C}$ and a compression algorithm $a$, $gain(S, a)$ is calculated by first concatenating the contents of each container in $S$, and then using the compressed and original sizes of this concatenated container, together with the storage costs of the generated compression model, in the above formula.

{\em Compression cost} and {\em decompression cost} measure, respectively, the time required to apply and reverse the compression process.  Both time measures are largely dependent on the contents of the container(s) being compressed, as well as the compression algorithm being employed.  Often, there is an inverse relationship between compression gain and compression/decompression time requirements, as the most effective compression schemes tend to heavily transform the original sequence, while less sophisticated schemes that allow fast compression and decompression typically fare worse in terms of compression performance.   By $comp(S,a)$ and $decomp(S, a)$, we denote, respectively, the time required to compress and decompress the contents of a container subset $S$ that has been compressed with algorithm $a$.

In the sequel, we make the assumption that all three measures can be calculated in polynomial time (with respect to the size of the input container subset).  Practical compression programs do feature a linear running time,  meaning that in theory, an exact calculation of each measure can be obtained by simply running the compressor on the input.  However, in many cases, the time and computational effort required to do this outweighs the benefit of obtaining a precise measurement (e.g., if the number of containers and/or compression algorithms is large), leading one to adopt faster and less expensive methods that provide reasonable estimates of compression gain, compression cost, and decompression cost. 

\paragraph*{Trade-offs between compression gain and decompression cost. }  Across application domains, the primary motivations for compressing XML differ.  These goals typically dictate a tradeoff between compression gain and compression and/or decompression cost.  We briefly examine the three most common use cases for XML compression.

\begin{itemize}
\item {\em Data archiving. }This encompasses any application where large volumes of XML-encoded data must be preserved, yet accessed infrequently (e.g., web server logs). Here, the fundamental goal is to conserve disk space by maximizing the achievable compression gain, while much less emphasis is placed on time costs.  This is because it is expected that each data set will only be compressed once, and only needs to be decompressed on the infrequent occasions that it is accessed.  
\item {\em Data exchange. }In this class of applications, XML-encoded data is exchanged between multiple parties, typically over a network.  XML documents are typically small in size and have a short life span (e.g.,~RPC-style web services, where an XML document only includes information related to a single service request or response. Once the document is processed, it is immediately discarded).  Here, the use of XML compression seeks mainly to improve application throughput and reduce bandwidth consumption by reducing the average size of transmitted messages, while not imposing excessive additional costs for compression and decompression (with the rationale being that extra time spent compressing and decompressing messages must not outweigh the performance benefits gained by employing compression in the first place). 
\item {\em Database applications. }In this scenario, the XML document is treated as a persistent data store over which queries are issued.  In many instances, it is possible to anticipate the types of queries which will be most commonly issued, which can be represented as a {\em query workload}. XML compression techniques are used primarily to improve query performance, by reducing the required number of disk reads/writes. As with any database application, query performance is paramount, and therefore minimizing decompression cost (particularly over the workload) becomes as important as maximizing the compression gain.  {\em Query-friendly} compression schemes, allowing certain queries to be carried out directly on the compressed representation, are highly desirable as they can reduce the necessary number of decompression operations.  Note that in our model, we can effectively capture the query performance for a given compression configuration over a particular workload within the decompression cost, as the latter measure is a function both of the chosen container partition strategy and the chosen compression algorithms for each container subset within the partition.
\end{itemize}

\section{Complexity Analysis of Compression Configuration Selection}
\label{sec:analysis}

We recall from the discussion in Sec.~\ref{subsec:XMLCompression} that our goal is to discover an optimal {\em compression configuration}, specifying both a partitioning strategy of the container set $\mathcal{C}$ and an assignment of a compression algorithm to each partition set. In this section, we demonstrate the {\bf NP}-hardness of this problem.

\begin{definition}
A {\em configuration} $\langle P, \alpha\rangle$ consists of a partition $P = \{S_1,\ldots,S_t\}$ of $\mathcal{C}$, and an algorithm assignment function $\alpha: P \rightarrow \mathcal{A}$ that assigns to each $S \in P$ a compression algorithm $a \in \mathcal{A}$. \qed
\end{definition} 

\begin{definition}
An instance of the optimization version of the {\em optimal compression configuration} problem consists of the following inputs:

\begin{itemize}
\item a set of available compression algorithms $\mathcal{A} = \{a_1,\ldots,a_q\}$;
\item a set of containers $\mathcal{C} = \{C_1,\ldots,C_x\}$; 
\item $gain: 2^{\mathcal{C}} \times \mathcal{A} \rightarrow \mathbb{Q}$, a function indicating the compression gain obtained when a specific compression algorithm in $\mathcal{A}$ is applied to a specific container subset in $2^{\mathcal{C}}$;
\item $comp: 2^{\mathcal{C}} \times \mathcal{A} \rightarrow \mathbb{Q}$, a function indicating the time cost associated with a compression of a specific container subset in $2^{\mathcal{C}}$ using a specific algorithm in $\mathcal{A}$;
\item $decomp: 2^{\mathcal{C}} \times \mathcal{A} \rightarrow \mathbb{Q}$, a function indicating the time cost associated with decompressing a specific container subset in $2^{\mathcal{C}}$ that has previously been compressed using a specific algorithm in $\mathcal{A}$;
\item $T_c$, an upper bound on total compression cost; and
\item $T_d$, an upper bound on total decompression cost.
\end{itemize}

The goal is to discover a configuration $\langle P, \alpha \rangle$ that maximizes $$\displaystyle \sum_{S \in P} gain(S, \alpha(S))$$ subject to 

\begin{eqnarray*}
\displaystyle \sum_{S \in P} comp(S, \alpha(S)) & \le & T_c \\
\end{eqnarray*}
and
\begin{eqnarray*}
\displaystyle \sum_{S \in P} decomp(S, \alpha(S)) & \le & T_d \enspace .
\end{eqnarray*}

In the decision version of the problem, there is an additional input $L \in \mathbb{Q}^{*}$ and a solver outputs ``yes'' if there exists a configuration $\langle P, \alpha \rangle$ such that $$\displaystyle \sum_{S \in P} gain(S, \alpha(S)) \ge L$$ subject to the given constraints, and ``no'' otherwise.
\end{definition}

\begin{theorem}
\label{thm:configurationSelection}
Selecting an optimal compression configuration is {\em {\bf NP}-hard}.
\end{theorem}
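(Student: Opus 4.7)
The plan is to first observe that the decision version is in \textbf{NP}: a configuration is a partition of $\mathcal{C}$ together with an algorithm assignment, both of polynomial size, and by the standing assumption that $gain$, $comp$, and $decomp$ are polynomial-time computable, the objective and the two cost constraints can all be verified in polynomial time. The bulk of the argument is therefore a polynomial-time reduction from a known \textbf{NP}-hard problem; I would reduce from \textsc{Exact Cover by 3-Sets} (X3C), since this yields a reduction that works already in the restricted setting with a single compression algorithm and trivial cost bounds, matching the strengthened hardness claim made in the abstract.

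Given an X3C instance with ground set $X=\{x_1,\ldots,x_{3q}\}$ and a collection $F \subseteq \binom{X}{3}$, I would construct an instance of the optimal compression configuration problem as follows. Let $\mathcal{C}=\{C_1,\ldots,C_{3q}\}$ with $C_i$ corresponding to $x_i$, let $\mathcal{A}=\{a\}$, and define
\[
gain(S,a)=\begin{cases} 3 & \text{if } |S|=3 \text{ and } \{x_i : C_i \in S\}\in F,\\ 0 & \text{otherwise,}\end{cases}
\]
with $comp(S,a)=decomp(S,a)=0$ for every $S$, and $T_c=T_d=0$. The decision threshold is $L=3q$. The construction is polynomial in the size of $(X,F)$, and the three functions satisfy the polynomial-time assumption because membership in $F$ is decided by a table lookup.

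For correctness, if $F'=\{f_1,\ldots,f_q\}\subseteq F$ is an exact cover, then partitioning $\mathcal{C}$ according to $F'$ yields a feasible configuration (all costs are zero) with total gain exactly $3q$. Conversely, if some feasible configuration attains gain at least $3q$, let $k$ be the number of parts of size $3$ whose underlying element set lies in $F$; these parts contribute $3k$ to the total gain and all other parts contribute zero, so $3k \ge 3q$ and hence $k \ge q$. But these $k$ parts alone already cover $3k$ of the $3q$ elements, which forces $k=q$ and leaves no other parts in the partition, so the corresponding sets form an exact cover of $X$.

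The main obstacle I anticipate is not the combinatorial reduction itself but rather the legitimacy of prescribing an essentially arbitrary $gain$ function. I would lean on the problem statement, which treats $gain$, $comp$, and $decomp$ purely as inputs constrained only by polynomial-time computability, so the fabricated functions are admissible. If a more restrictive interpretation were demanded (namely that these quantities reflect actual compression behavior on real container contents), one could instead realize or tightly approximate the required gain values with synthetic repetitive strings of calibrated lengths fed to a standard compressor, but the resulting bookkeeping would obscure the combinatorial core of the argument without adding insight.
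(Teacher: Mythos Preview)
Your argument is correct and complete: the X3C reduction is clean, the gain function is polynomial-time evaluable by table lookup, and the two directions of the equivalence are argued precisely (the counting argument forcing $k=q$ is exactly right).

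The paper takes a different route, reducing from \textsc{Subset Sum} with a single algorithm $a$ and setting $gain(S,a)=comp(S,a)=decomp(S,a)=\sum_{C\in S} s(C)$ together with $L=T_c=T_d=B$. Compared with that reduction, yours has a genuine structural advantage. Because the paper's three functions are additive over containers, for \emph{any} partition $P$ of $\mathcal{C}$ one gets $\sum_{S\in P} gain(S,a)=\sum_{C\in\mathcal{C}} s(C)$ (and likewise for $comp$ and $decomp$), so the totals are independent of the partition and the intended subset-selection is not actually encoded in the configuration; the paper's reduction as written therefore needs an additional device (a non-additive gadget, or the option of leaving containers outside the partition) to go through. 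Your X3C construction sidesteps this by making $gain$ genuinely non-additive---only triples belonging to $F$ earn anything---so the hardness is located squarely in the choice of partition. This also dovetails better with the corollary that the problem remains hard when $|\mathcal{A}|=1$, since your instance already lives in that restricted setting with trivial cost constraints.
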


\begin{proof}
To show that optimal compression configuration is in {\bf NP}, note that it is possible to efficiently verify that a given configuration $\langle P, \alpha\rangle$ yields a ``yes'' answer by calculating the respective sums for the $gain$, $comp$, and $decomp$ values of each $(S,\alpha(S))$ pair and ensuring they obey the specified constraints for $L$, $T_c$, and $T_d$, respectively.  

{\bf NP}-completeness follows from a polynomial-time reduction from {\tt SUBSET SUM}, whose decision version is known to be {\bf NP}-complete~\cite{Garey79}.  Recall that an instance of the decision version of {\tt SUBSET SUM} consists of a set $Y = \{y_1,\ldots,y_r\}$ of elements, a function $s: Y \rightarrow \mathbb{Z}^{*}$ that assigns a score to each $Y$-element, and a value $B \in \mathbb{Z}^{+}$.  A solver outputs ``yes'' if there exists a subset $Y^{\prime} \subseteq Y$ such that $\sum_{y \in Y^{\prime}} s(y) = B$, and ``no'' otherwise.

The reduction proceeds as follows. From $Y = \{y_1,\ldots,y_r\}$, the container set $\mathcal{C} = \{C_1,\ldots,C_r\}$ is constructed.  $\mathcal{A}$ contains a single compression algorithm $a$; for a specific container subset $S \subseteq \mathcal{C}$, we set $gain(S,a) = comp(S,a) = decomp(S,a) = \sum_{C \in S} s(C)$.  Furthermore, $L = T_c = T_d = B$.  The only compression configurations yielding a ``yes'' answer correspond under this mapping to an instance of {\tt SUBSET SUM} which would yield a ``yes'' answer from the {\tt SUBSET SUM} solver.  

Since the decision version of the optimal compression configuration problem is {\bf NP}-complete, it follows that its optimization version is {\bf NP}-hard.\qed
\end{proof}

As a consequence of the preceeding proof, we may also infer the following about the variant of optimal compression configuration selection where the same compression algorithm is applied to all container groups.

\begin{corollary}
Selection of an optimal compression configuration remains {\em {\bf NP}-hard} when $|\mathcal{A}| = 1$.
\end{corollary}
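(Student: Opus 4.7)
The plan is to observe that the corollary follows essentially by inspection of the reduction used in the proof of Theorem~\ref{thm:configurationSelection}, so the proposal is to reuse that reduction verbatim and verify that the single-algorithm restriction is already satisfied. Concretely, the reduction from \texttt{SUBSET SUM} constructs an instance of optimal compression configuration in which $\mathcal{A}$ contains only one algorithm $a$, with $gain$, $comp$, and $decomp$ all defined in terms of the scoring function $s$. Since the mapping never appeals to a second algorithm anywhere, the same construction is a valid reduction to the restricted problem where $|\mathcal{A}|=1$.

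The key steps I would carry out are: (i) restate the restricted problem, making explicit that $\alpha$ degenerates into the constant function mapping every $S \in P$ to the unique algorithm $a$, so a configuration reduces to just a partition $P$ of $\mathcal{C}$; (ii) invoke the same reduction from \texttt{SUBSET SUM}, noting that the correspondence $C_i \leftrightarrow y_i$ and $L = T_c = T_d = B$ is unchanged; (iii) re-verify the ``yes''-instance correspondence in the restricted setting, i.e., that a subset $Y' \subseteq Y$ with $\sum_{y \in Y'} s(y) = B$ exists if and only if some partition $P$ of $\mathcal{C}$ contains a block $S$ with $\sum_{C \in S} s(C) \ge B$ while still satisfying the $T_c$ and $T_d$ constraints; and (iv) conclude that the decision version of the restricted problem is \textbf{NP}-complete, so its optimization version is \textbf{NP}-hard.

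I do not expect any substantial obstacle here, since the corollary is really a strengthening-by-observation of the theorem rather than an independent result. The only mild subtlety is bookkeeping: one must be careful that extending a singleton block $\{C_i\}$ to a larger block in the partition does not artificially satisfy the gain constraint while violating $T_c$ or $T_d$. Because $gain$, $comp$, and $decomp$ are all set to the same value $\sum_{C \in S} s(C)$ and the corresponding bounds $L$, $T_c$, $T_d$ are all $B$, the three inequalities collapse into a single condition on a distinguished block, and the equivalence with \texttt{SUBSET SUM} carries over cleanly. This check is the one place where I would slow down and write out the argument in full.
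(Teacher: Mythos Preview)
Your proposal is correct and mirrors the paper's own treatment: the paper does not give a separate argument for the corollary but simply notes that the reduction in Theorem~\ref{thm:configurationSelection} already uses a singleton algorithm set $\mathcal{A}=\{a\}$, so the hardness result specializes immediately to the case $|\mathcal{A}|=1$. Your steps (i)--(iv) spell this out more explicitly than the paper does, but the underlying idea is identical.
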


This indicates that the ``hardness'' of the overall problem is not caused by algorithm selection, rather it is due to the difficulty of determining an optimal container partitioning strategy.  Indeed, we can determine an optimal algorithm selection for a specified container partitioning strategy in $O(|\mathcal{A}|\cdot |\mathcal{C}|)$ time by simply testing each available algorithm on each container subset.

\section{An Approximation Algorithm for Compression Configuration Selection}
\label{sec:algorithm}
In this section, we describe an approximation algorithm for selecting an optimal compression configuration. Throughout the discussion, we use the term {\em container subset} to refer to one or more containers which have been grouped together, and {\em grouping} to indicate a set of container subsets.  A grouping which covers {\em all} containers (i.e.,~assigns each container to exactly one container subset) is referred to as a {\em partitioning strategy}. 

In the first phase of the approximation algorithm (Sec.~\ref{subsec:BandB}), a branch-and-bound strategy is used to select a set of {\em candidate partitioning strategies}: ~a set of partitioning strategies which are estimated to be highly compressible.  In the second phase (Sec.~\ref{subsec:optConfig}), this set of partitioning strategies is tested against the set of available compression algorithms to determine the single compression configuration that yields the highest compression gain, while obeying the specified upper bounds on compression and decompression costs. 

We recall from Eq.~\ref{eq:gain} that computing the compression gain of a container subset $S$ is based on two additional measures: the size of the compressed representation of $S$, and the additional storage cost incurred by the generated compression model.  In the remainder of this section, we first describe how container compressibility and storage costs are estimated, and then discuss how these estimates are used in the computation of compression gains. We then detail both phases of the approximation algorithm. 

\subsection{Estimating Compressibility}
In his classic paper~\cite{Shannon48}, Shannon proved that the entropy rate $r$ of a stationary stochastic process represents a bound on lossless compression of any message emitted by that process.  For a set of random variables ${X_1,\ldots,X_n}$ whose values are drawn from a finite alphabet $\mathcal{X}$, 

$$r = \lim_{n\to\infty} H(X_n|X_{n-1},X_{n-2},\ldots ,X_{1})$$
where $H(X_n|X_{n-1},X_{n-2},\ldots,X_{1})$ denotes the conditional entropy of $X_n$ when the values of \linebreak $X_{n-1},X_{n-2},\ldots,X_1$ have been witnessed. Essentially, each $X_i$ represents a separate message from the same source; as one receives more and more of these messages (i.e.,~$n$ approaches infinity) they have more of a history to base the entropy estimate on, and hence the estimate will approach the true entropy value more closely.

While Shannon's entropy rate does provide a theoretical lower bound on compressibility, it proves to be impractical for our setting.  This is because the entropy rate is an asymptotic measure calculated by increasing the compression block size $n$ to infinity, while this is clearly impossible to do when one's knowledge of the source consists of a single string of finite length.  In other words, a single finite length string often does not provide enough opportunity to ``learn'' the source well enough to achieve an accurate measure of the true entropy.

We instead turn to Lempel and Ziv's method for calculating string complexity~\cite{Lempel76}. In this approach, which we refer to as LZ76, the input string $x$ is parsed once from left-to-right, and a set of phrases $\mathcal{P}_{x}$ are recursively built and added to a dictionary.  Once parsing has been completed, the complexity of $x$ is 

\begin{equation}
\label{eq:LZ}
C_{\mathrm{LZ}}(x) = \frac{|\mathcal{P}_{x}|}{|x|}\enspace ,
\end{equation}
the ratio of phrases per character.  Lempel and Ziv showed that this approach yields an approximation ratio of $\frac{n}{\log n}$ to Shannon's entropy rate.  

We now describe the parsing process of LZ76 in greater detail.

\begin{enumerate}
\item Initialize the dictionary to be empty.
\item If the end of $x$ has been reached, terminate.  Otherwise, read the next character from $x$ and assign it to phrase $p$. If $p$ matches an existing entry in the dictionary, continue reading characters from $x$ and appending them to $p$ until $p$ no longer matches an existing dictionary entry.    
\item Assign $p$ the next available index position, and add both the index value and $p$ to the dictionary. Go to step 2.
\end{enumerate}

\begin{example}
\label{ex:LZparsing}
For a container subset $S$ with contents ``aaabc'', the first iteration constructs the phrase $\langle \text{a}\rangle$ (step 2) and adds it to the dictionary (step 3) at index position 1.  The second iteration reads the next character (`a') from $x$ and assigns it to $p$. Since $\langle \text{a}\rangle$ is in the dictionary, the next character is read from $x$ and appended to $p$ to form the phrase $\langle \text{aa} \rangle$, which does not appear in the dictionary. This phrase is added to the dictionary at index position 2. The following iterations construct phrases $\langle \text{b} \rangle$ and $\langle \text{c} \rangle$ and add them to the dictionary at index positions 3 and 4, respectively.  We then compute the complexity of $S$ as $C_{\mathrm{LZ}}(S) = 4/5 = 0.8$.   
\end{example}

\subsection{Estimating Storage Cost}
\label{subsec:storageCostModel}

Obtaining a complete picture of the achieved storage gain via compression requires one to take into account not only the size of the compressed data itself, but also the additional space required to store information about the compression model used.  Compression models typically consist of a mapping between the uncompressed symbol alphabet and the corresponding codewords assigned to each symbol by the compression algorithm. 

To compute the storage gain for a container subset, we simulate the cost of transmitting the dictionary using the coding strategy of LZ78~\cite{Ziv78} (recalling that LZ78 utilizes the parsing strategy of LZ76 in concert with a specific coding strategy for dictionary phrases). Each time a new phrase of length $l$ is constructed, two pieces of information are emitted to the compression stream: (1) a codeword $W$, representing the index position of the existing phrase $p$ of length $l-1$ that forms a prefix of the new phrase, and (2) the ``innovative'' character $c$ that is appended to $p$ to form the new phrase.  Since phrase indexing begins at 1, the highest index value for a dictionary with $t$ phrases will be $t$.  Using a fixed-length encoding, then, we can express each $W$ value using $\log_2(t)$ bits, requiring a total of $t \cdot \log_2(t)$ bits to encode all codewords.  Furthermore, a single character $c$ is emitted each time a new phrase is created, requiring an extra $8 \cdot t$ bits (here, we assume a text encoding that requires a single byte per character is in use; multibyte formats can be incorporated by replacing 8 with the number of bits per character used in the chosen encoding format).

\begin{definition}
The {\em storage cost} (expressed in bits) associated with a container subset $S$ is calculated as

\begin{equation}
\label{eq:storageCost}
storageCost(S) = t \cdot (8 + \log_2(t))
\end{equation} 
where $t$ is the total number of entries in the dictionary after an LZ76 parsing of $S$.

The storage cost (expressed in bits) associated with a container grouping $G$ is calculated as
\begin{equation}
\label{eq:storageCostGrouping}
storageCost(G) = \displaystyle \sum_{S \in G} storageCost(S)\enspace ,
\end{equation} 
\noindent namely, it is the sum of the storage costs of each container subset $S$ contained within $G$.
\end{definition}

\subsection{Modeling Compression Gain}
\label{subsec:compGainModel}
Two distinct gain measures are associated with each container grouping: the {\em local compression gain (localGain)} indicates the compression gain obtained by using the current grouping, while the {\em maximum potential compression gain (mpGain)} indicates the highest possible compression gain that can be obtained moving forward by chosing any partitioning strategy that ``agrees with'' the current grouping (i.e.,~there exists no container $C$ such that the current grouping and the partitioning strategy place $C$ within different container subsets). Both measures are used in the first phase of the algorithm to guide the search for candidate container partitioning strategies, and we presently describe how both measures are calculated.

\begin{definition} The {\em local compression gain} (expressed in bits) of a container subset $S$, denoted {\em localGain($S$)}, is calculated as 
\begin{equation}
\label{eq:localSubsetGain}
localGain(S) = \max \{0, \Gamma(S)\}\enspace ,
\end{equation}
where

\begin{equation}
\label{eq:positiveSubsetGain}
\Gamma(S) = 8 \cdot |S| - (C_{\mathrm{LZ}}(S) \cdot |S| + storageCost(S))
\end{equation}
and $|S|$ indicates the total byte length of the contents of $S$.
\end{definition}

Eq.~\ref{eq:localSubsetGain} ensures that compression is only applied if it results in a positive compression gain; otherwise, the subset $S$ is left {\em uncompressed}, and $localGain(S) = 0$.  In Eq.~\ref{eq:positiveSubsetGain}, the sum of the estimated compressed size of $S$ and the associated storage cost is subtracted from the original bit length of $S$.  This quantity represents the total number of bits saved by applying compression to $S$. Note that while Eq.~(\ref{eq:positiveSubsetGain}) assumes a byte-level compression of container contents, text encoding schemes using multiple bytes per character (e.g.,~Unicode formats) may be supported by considering each byte as an individual token.

\begin{definition}
The local compression gain of a container grouping $G$ is calculated as

\begin{equation}
localGain(G) = \displaystyle \sum_{S \in G} localGain(S)\enspace .
\end{equation}
\end{definition}

Hence, the overall local compression gain for an entire container grouping (i.e.~a set of container subsets) is simply calculated as the sum of local gains for each container subset within that grouping.

\begin{example}
Recalling the example grouping $S = \{aaabc\}$ from Ex.~\ref{ex:LZparsing}, $\Gamma(S) = 5 \cdot 8 - (0.8 \cdot 5 + (4 \cdot (8 + \log2(4)))) = -4$ bits and therefore $localGain(S) = 0$ bits, indicating that $S$ should be left uncompressed.  
\end{example}

\linesnumbered
\begin{algorithm}[t]
\KwIn{$D$, the set of existing LZ76 dictionaries for the grouping $G$; $c_t$, total number of characters in all containers of $\mathcal{C}$; $c_u$, number of remaining unprocessed characters.}
\KwOut{$mpGain(G)$, indicating the maximum potential compression gain for $G$.}
\begin{enumerate}
\item Choose the dictionary $d \in D$ containing the phrase of longest length, and let $S_{max}$ denote the container subset whose dictionary is $d$.  In case of a tie, choose the subset with the lowest $C_{\mathrm{LZ}}$ value. Set $nPhrases$ to be the number of phrase entries in $d$, and $maxPhraseLength$ to be the length of the longest phrase, plus one.
\item While $c_u \ge maxPhraseLength$, simulate the creation of a new, longer phrase by performing the following steps:
\begin{enumerate}
\item Set $c_u = c_u - maxPhraseLength$. This reduces the number of unprocessed characters to include only those not covered by the new phrase.
\item Set $nPhrases = nPhrases + 1$, to update the count of phrases in the dictionary $d$.
\item Set $maxPhraseLength = maxPhraseLength + 1$, to ensure the next created phrase (if applicable) will have a length one character longer than the current longest phrase.
\end{enumerate}
\item At this point, if any unprocessed characters remain (i.e.,~$c_u > 0$), this number is less than the length of the next new phrase to be created.  To handle the remaining characters, we just choose an existing phrase of length $c_u$ from $d$ and no additional phrases will be added from this point. 
\item Compute $C_{\mathrm{LZ}}(S_{max}) = \frac{nPhrases}{c_t}$, and use this value to recalculate $localGain(S_{max})$. 
\item Compute and return $mpGain(G) = \displaystyle \sum_{S \in G \setminus S_{max}} localGain(S) + localGain(S_{max})$.
\end{enumerate}
\caption{Calculation of maximum potential compression gain.}
\label{alg:maxGain}
\end{algorithm}

As mentioned above, the maximum potential compression gain is used to indicate the upper bound on the achievable compression gain for any partitioning strategy that agrees with the current grouping.  Since the total number of characters (i.e.,~the number of characters contained within the existing grouping $G$, plus the number of characters contained within containers that have yet to be assigned to subsets) is fixed, so too is the first product in Eq.~\ref{eq:positiveSubsetGain}, and maximizing compression gain over a subset $S$ then requires the sum of $C_{\mathrm{LZ}}(S)$ and $storageCost(S)$ to be minimized. From Eq.~\ref{eq:LZ} and Eq.~\ref{eq:storageCost}, one observes that both quantities are minimized when the number of generated phrases is also minimized.  Equivalently, at each step during LZ76 parsing, one seeks to generate the {\em longest applicable phrase} by appending an extra character to the longest existing phrase in the dictionary. Alg.~\ref{alg:maxGain} illustrates how the maximum potential gain is calculated for a grouping.   

In the first step, the longest phrase over all subset dictionaries is identified. For the container subset $S_{max}$ whose dictionary contains this longest phrase, the existing dictionary is extended with longer phrases, until no unprocessed characters remain.  More precisely, each iteration of step 2 creates a new phrase one character longer than the previous longest phrase, and applies it to the sequence of unprocessed characters. Eventually, either all remaining characters will be processed, or the number of remaining characters will be less than the longest phrase.  In the latter case, a shorter existing phrase is reused to cover the remaining characters (step 3).  Step 4 computes the new value of $C_{\mathrm{LZ}}(S_{max})$, and updates the value of $localGain(S_{max})$.  Finally, step 5 computes the $mpGain$ for the grouping $G$ by summing the updated $localGain$ score for $S_{max}$ with the existing $localGain$ scores for the remaining subsets in $G$.

\begin{example}
To illustrate the computation of $mpGain$, we recall from Ex.~\ref{ex:LZparsing} the previous example subset $S = \{aaabc\}$, and the dictionary of phrases $\{\langle a\rangle,\langle aa\rangle,\langle b\rangle,\langle c\rangle\}$ that results from an LZ76 parsing of $S$.  Assume that there is one additional container $C_x$ with 5 characters.  Alg.~\ref{alg:maxGain} first selects the longest existing phrase $\langle aa\rangle$ and constructs a new phrase of length 3.  Applying this to $C_x$ leaves only $5 - 3 = 2$ remaining unprocessed characters, a number which is less than 3, the current maximum phrase length.  Therefore, the existing pattern $\langle aa\rangle$ is applied, and no unprocessed characters remain. Only one additional pattern has been created, and the new complexity score is $5/10 = 0.5$ symbols per character, while the updated storage cost is $5 \cdot (8 + \log_2(5)) \approx 51.6096$ bits, and $mpGain(S) \approx 10 \cdot 8 - (0.5 \cdot 10 + 51.6096) \approx 23.3904$ bits.   
\end{example}

\subsection{Branch-and-Bound Algorithm for Selecting Candidate Partitioning Strategies}
\label{subsec:BandB}
In this phase, a search tree is constructed in which each node corresponds to a particular grouping.  Each node stores the $localGain$ and $mpGain$ values for its associated grouping.  The subtree rooted by a node $n$ encompasses all groupings that extend the grouping associated with $n$ by assigning additional containers to container subsets.  

Before explaining the details of the branch-and-bound procedure, we begin with an intuition as to why this technique is applicable to the subproblem of choosing a container grouping. Recall that the $mpGain$ indicates the highest possible gain possible for any partitioning strategy based on the current grouping. In addition, we may also observe that $mpGain(p) \ge mpGain(c)$ for any parent node $p$ and child node $c$ in the search tree.  This is due to the fact that there are fewer remaining unprocessed characters as one travels from $p$ to $c$: in particular, the placement of one additional container has been ``fixed'' by the grouping associated with $c$.  At the lowest level of the search tree, {\em all} containers have a fixed placement (i.e., each leaf node corresponds to a partitioning strategy), and therefore $mpGain$ will equal $localGain$ for each leaf node.  

Exploiting these properties of the $mpGain$ measure provides us with our {\em bounding criterion}: if the $mpGain$ for a grouping is sufficiently less than the best local gain value encountered thus far, the entire subtree rooted at the node representing the grouping can be immediately eliminated from consideration (or ``killed''). We now are in a position to describe the specifics of the branch-and-bound procedure.

The inputs to the procedure are a set of containers $\mathcal{C}$, sorted in descending order of their respective sizes, along with an additional parameter $\delta \in \mathbb{R}^{+}$. The latter specifies a threshold value used to determine whether a particular node should be ``killed'', or if it is worthwhile to continue branching into its subtree (in which case it is considered to be a ``live'' node). During the search procedure, the optimal local gain value encountered so far is stored in variable $optGain$. The root node of the search tree is assigned the grouping $\{C_1\}$, that is, a single set containing only the first container.  For $i = 2,...,|\mathcal{C}|$, the steps in Alg.~\ref{alg:BandB} are carried out to enumerate the various choices for placement of each container $C_i$ within the context of an existing grouping (where each such choice corresponds to a child node of the existing grouping node), and to determine the optimal choice of placement among the alternatives. Note that in Alg.~\ref{alg:BandB}, $x$ refers to the node currently being evaluated in the tree, and $G_x$ refers to the container grouping associated with $x$.  

For each ``live'' node $p$ at level $i$ in the tree, a set of child nodes are constructed; each represents a different strategy for placing the container $C_{i+1}$ into either a new subset, or within one of the existing container subsets present in the grouping associated with $p$.  Once all ``live'' nodes at level $i$ have been branched, $mpGain$ and $localGain$ values for all nodes at level $i$ are computed, and if necessary $optGain$ is updated to reflect a new global maximum for $localGain$.  For each node having a $localGain$ less than $optGain$, a test is carried out to ensure that its $mpGain$ falls within the range $[optGain - \delta,optGain]$.  If the test fails, the node is ``killed''.  Further branching is only carried out at level $i+1$ on the remaining live nodes at level $i$; at each iteration, the unbranched node at level $i$ with the highest $mpGain$ value is chosen. At level $|\mathcal{C}|$, the remaining live nodes will comprise the set $\mathcal{G}$ of candidate partitioning strategies.

\begin{algorithm}[t]
\KwIn{container $C_i \in \mathcal{C}$, context node $x$, and a threshold value $\delta \in \mathbb{R}^{+}$}
\KwOut{a set $\mathcal{G}$ of candidate container partitioning strategies}
\begin{enumerate}
\item Construct as the leftmost child of $x$ the grouping formed by adding the single-container subset $\{C_i\}$ to $G_x$. 
\item For each existing subset $S \in G_x$, add a child to $x$ corresponding to the grouping formed by $G_x \setminus \{S\} \cup \{S \cup C_i\}$. 
\item For each of the child nodes $y$ created in steps 1 and 2, let $G_y$ represent the grouping associated with $y$ and calculate $localGain(G_y)$ and $mpGain(G_y)$.
\item If one of the newly constructed children nodes $y$ results in a $localGain(G_y)$ value that is higher than $optGain$, set $optGain$ to this value.
\item ``Kill'' any child nodes $y$ for which $mpGain(G_y) < optGain - \delta$.
\end{enumerate}
\caption{Construction of the branch-and-bound search tree.}
\label{alg:BandB}
\end{algorithm}

We illustrate the working of the branch-and-bound procedure with the following example.

\begin{example}
\label{ex:bAndB}
Assume that we have the container set $\mathcal{C} = \{C_1, C_2, C_3\}$, where the respective contents of the containers are $C_1 = \{aaabcaaabcaaabcabcab\}$, $C_2 = \{15720653197608243849\}$, and $C_3 = \{abcababcbaaaabcabcab\}$.  We set $\delta = 30.0$ bits. Fig.~\ref{fig:BranchBound} depicts the search tree formed by this process.  Initially, the grouping $\{C_1\}$ is formed as the root of the search tree.  

In the second level of the tree, both possibilities for incorporating container $C_2$ into the existing grouping are considered: either creating a second subset to store $C_2$, or combining $C_2$ with $C_1$ in a single container subset.  The left child of the root corresponds to the first choice, creating the grouping $\{C_1\},\{C_2\}$, while the right child represents the second choice as the grouping $\{C_1,C_2\}$.  The gain values are then calculated for both children, and the test in Step 5 is performed which determines that neither node can be ``killed''.  Therefore, child nodes are constructed for both nodes based on the possibilities for assigning container $C_3$.  

In the case of the left child, there are three possibilities: assigning $C_3$ to a third, separate subset to form the grouping $\{C_1\},\{C_2\},\{C_3\}$; appending $C_3$ to the first existing subset to create the grouping $\{C_1,C_3\},\{C_2\}$; or adding $C_3$ to the second existing subset, creating the grouping $\{C_1\},\{C_2,C_3\}$.  Hence, three children are created within the left subtree.  A similar process creates two child nodes in the right subtree, corresponding to the choice of creating a new subset for $C_3$ (generating the grouping $\{C_1,C_2\},\{C_3\}$), or combining it within the existing subset of the parent grouping (forming the grouping $\{C_1,C_2,C_3\}$.  The best local gain is achieved by the grouping $\{C_1,C_3\},\{C_2\}$; when we compare the $mpGains$ of the other nodes at level 3, only $\{C_1\},\{C_2\},\{C_3\}$ comes within $\delta = 30.0$ bits of this optimal local gain.  Hence, only these two nodes remain alive, and the other three are ``killed''.  Since all three containers have now been assigned, we return the two remaining live nodes at level three as the set of candidate partitioning strategies, $\mathcal{G}$.       
\end{example}

\begin{figure}[t]
\centering
\includegraphics[scale=0.55]{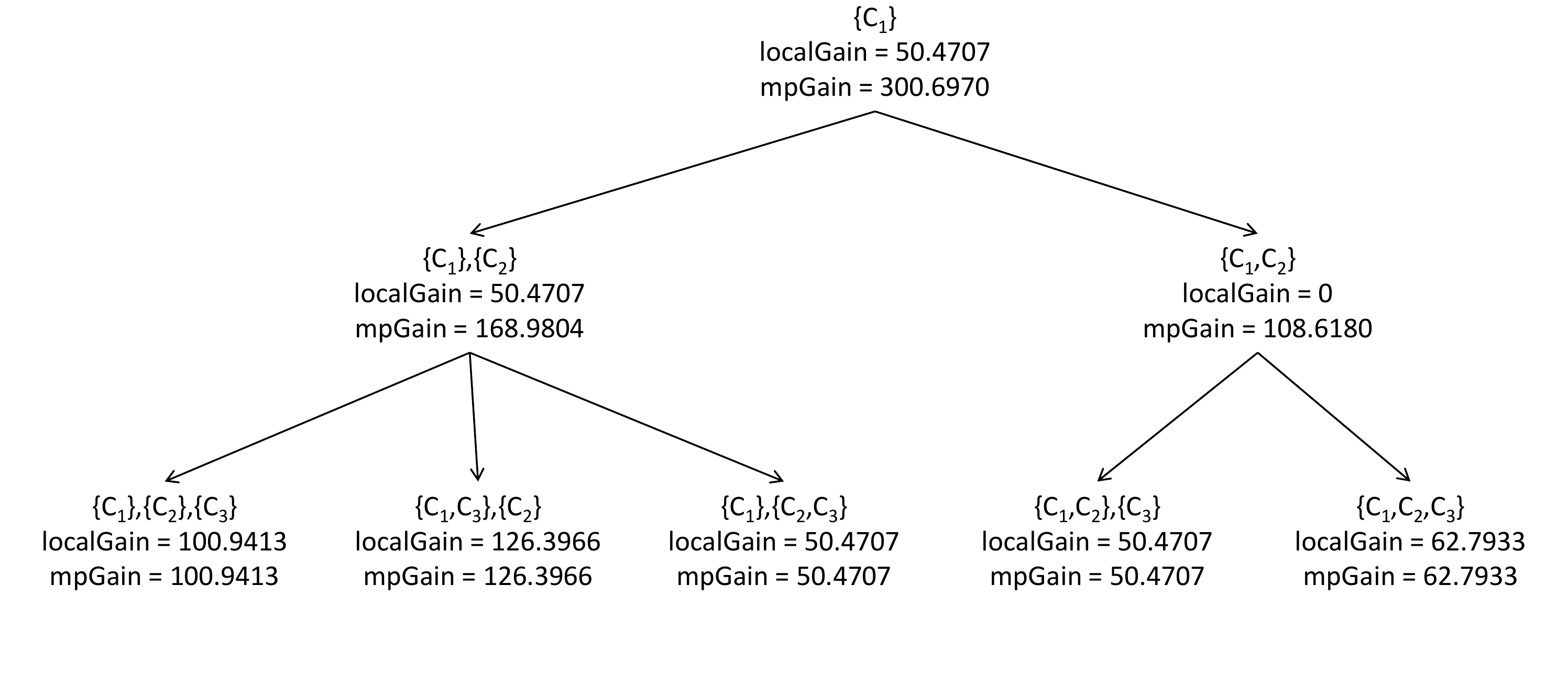}
\caption{Branch-and-bound search tree for Ex.~\ref{ex:bAndB}.}
\label{fig:BranchBound}
\end{figure}

The pruning criterion in step 5 serves to reduce the size of the search space, yet it is crucial to ensure that it does not result in the removal of the node with the highest local compression gain (the optimal node).  The following result proves that the optimum node will never be ``killed''.

\begin{figure}[t]
\centering
\includegraphics[scale=0.55]{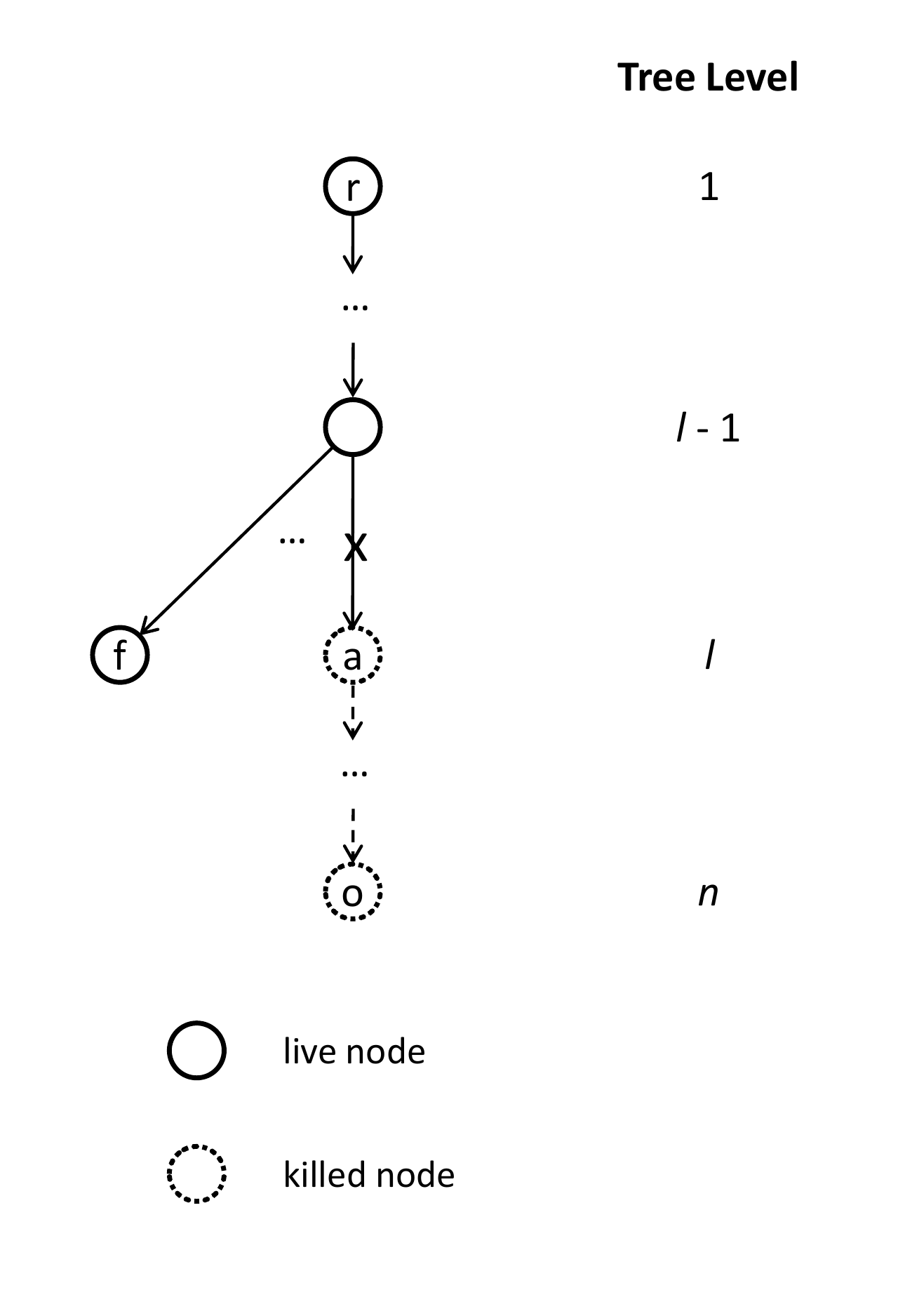}
\caption{Illustrating the proof of Proposition~\ref{prop:optimalityGuarantee}.}
\label{fig:BranchBound}
\end{figure}

\begin{proposition}
\label{prop:optimalityGuarantee}
Alg.~\ref{alg:BandB} ensures that the optimal node is always visited.
\end{proposition}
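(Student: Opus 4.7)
The plan is to show that no ancestor of the optimal leaf can satisfy the killing condition in Step 5 of Alg.~\ref{alg:BandB}, so the optimal leaf is eventually constructed, evaluated, and retained. Let $n^{*}$ denote the leaf whose associated partitioning strategy attains the global maximum local compression gain, and write $\mathrm{OPT} = localGain(G_{n^{*}})$. I want to prove two facts and then combine them: (i) $mpGain(G_{a}) \geq \mathrm{OPT}$ for every ancestor $a$ of $n^{*}$ in the search tree, and (ii) $optGain \leq \mathrm{OPT}$ at every point during the execution of Alg.~\ref{alg:BandB}.

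First I would establish (i). By construction in Alg.~\ref{alg:maxGain}, $mpGain(G)$ is obtained by extending the current dictionaries with the longest phrases still available; as argued in the paragraph preceding Alg.~\ref{alg:maxGain}, this minimises the sum $C_{\mathrm{LZ}}(S) \cdot |S| + storageCost(S)$ and hence upper-bounds $localGain$ over every partitioning strategy that ``agrees with'' $G$. Since every leaf descendant of $a$ corresponds to a partitioning strategy extending $G_{a}$, in particular the strategy at $n^{*}$, we obtain $mpGain(G_{a}) \geq localGain(G_{n^{*}}) = \mathrm{OPT}$. Fact (ii) is immediate: $optGain$ is only ever assigned values of the form $localGain(G_{y})$ for explored nodes $y$, and each such value is a local gain of some partitioning strategy, hence at most $\mathrm{OPT}$.

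Combining (i) and (ii) with the hypothesis $\delta \in \mathbb{R}^{+}$ gives, for every ancestor $a$ of $n^{*}$ and at every moment during the algorithm,
\begin{equation*}
mpGain(G_{a}) \;\geq\; \mathrm{OPT} \;\geq\; optGain \;>\; optGain - \delta .
\end{equation*}
Therefore the pruning condition $mpGain(G_{a}) < optGain - \delta$ of Step~5 is never triggered at $a$, so no ancestor of $n^{*}$ is killed. A straightforward induction on the depth $i = 2,\ldots,|\mathcal{C}|$, using that a live parent at level $i-1$ is branched in Steps~1--2 to produce every possible placement of $C_{i}$, then yields that the ancestor of $n^{*}$ at level $i$ is created and evaluated at iteration $i$; at $i = |\mathcal{C}|$ the node $n^{*}$ itself is constructed and visited.

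The main obstacle I anticipate is justifying fact (i) rigorously, because $mpGain$ is defined operationally through the greedy extension procedure of Alg.~\ref{alg:maxGain} rather than as an explicit supremum over descendant partitioning strategies. To make the argument airtight, I would observe that for any partitioning strategy $P$ extending $G_{a}$, any LZ76 dictionary induced by $P$ contains at least as many phrases as the hypothetical ``longest-phrase'' extension used inside Alg.~\ref{alg:maxGain}, since the latter achieves the minimum possible phrase count on the remaining unprocessed characters. The remaining steps (monotonicity of $mpGain$ down the tree, and the bound on $optGain$) are then routine.
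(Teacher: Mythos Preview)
Your argument is essentially the paper's, recast as a direct proof rather than by contradiction: both hinge on the key inequality $mpGain(G_a) \geq localGain(G_{n^*})$ for any ancestor $a$ of the optimal leaf $n^*$, justified via the greedy longest-phrase extension in Alg.~\ref{alg:maxGain}, and then combine this with the bound on $optGain$ to rule out the killing condition along the optimal root-to-leaf path.

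One small slip in your justification of (ii): you assert that every value assigned to $optGain$ ``is a local gain of some partitioning strategy'', but intermediate nodes $y$ carry groupings $G_y$ that do not cover all containers and hence are not partitioning strategies. The fix is easy: any such $G_y$ extends to a full partitioning $P$ (e.g.\ by placing each remaining container in its own singleton), and since $localGain$ is nonnegative on every subset (Eq.~\ref{eq:localSubsetGain}) one has $localGain(G_y) \leq localGain(P) \leq \mathrm{OPT}$, which is what (ii) needs.
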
 
\begin{proof}(by contradiction)
Assume that the supposed optimal node $o$ occurs at level $n$ in the tree.  Let $a$ be the highest ancestor of $o$ that has been ``killed'' by the pruning criterion of step 5, let $f$ be the (supposedly) false optimal node, and let $l < n$ be the level at which $a$ and $f$ occur.  Fig.~\ref{fig:BranchBound} provides an illustration of this scenario.  We first prove that for arbitrary $a$ and $o$, $mpGain(a) \ge localGain(o)$.  Assume instead that $localGain(o) > mpGain(a)$.  Since the groupings represented by $a$ and $o$ agree on the placement of the first $l$ containers, there must be at least one level $k$, for $k \in [l+1,n]$, at which the placement of container $C_k$ under the grouping associated with $o$ yields a higher gain than the placement specified by the grouping associated with $a$.  Without loss of generality, we first assume that $a$ is the direct ancestor (parent) of $o$, and hence $a$ and $o$ agree on the placement of the first $n-1$ containers, and only disagree on the choice of placement for container $C_n$.  Then

\begin{equation*}
\begin{split}   
localGain(o) &= \displaystyle \sum_{S_{i} \in o \wedge C_n \notin S_i} localGain(S_i) + \displaystyle \sum_{S_{j} \in o \wedge C_n \in S_j} localGain(S_j) \\
 &> \\
mpGain(a) &= \displaystyle \sum_{S_{i} \in a \wedge C_n \notin S_i} localGain(S_i) + \displaystyle \sum_{S_{j}^{\prime} \in a \wedge C_n \in S_{j}^{\prime}} localGain(S_{j}^{\prime})
\end{split}
\end{equation*}
where $S_j$ is the subset containing container $C_n$ under grouping $o$, and $S_{j}^{\prime}$ is the chosen subset for $C_n$ chosen by Alg.~\ref{alg:maxGain} for the input grouping $a$.  Simplifying, we obtain

\begin{equation*}
localGain(S_j) > localGain(S_{j}^{\prime}) 
\end{equation*}
which contradicts Alg.~\ref{alg:maxGain}. Recall that at each step, Alg.~\ref{alg:maxGain} applies the longest possible phrase, guaranteeing that no other LZ76 parsing strategy could yield a lower complexity score $C_{\mathrm{LZ}}$. This result can be generalized to the case where $a$ is an {\em indirect} ancestor of $o$, since $mpGain$ increases monotonically as one travels upward in a subtree.

Returning to the original claim, we note that if $o$ is optimal, then it must be greater than the value of $optGain_{n-1}$ (i.e.,~the globally optimal $localGain$ witnessed after $n-1$ levels of the search tree have been processed). Additionally, since the ancestor node $a$ has been ``killed'' during Step 5 of the branch-and-bound algorithm, it must also be true that $mpGain(a) < optGain_{l} - \delta$ (here, $optGain_{l}$ denotes the global optimum seen after $l$ levels of the search tree have been visited).  Since $optGain_{n-1} \ge optGain_{l}$, the only means of satisfying both conditions is for the following chain of inequalities to be satisfied:

$$localGain(o) > optGain_{n-1} \ge optGain_{l} \ge mpGain(a) + \delta\,.$$

This contradicts the previous result indicating that $mpGain(a) \ge localGain(o)$, completing the proof.\qed
\end{proof}

\subsection{Determining an Optimal Compression Configuration}
\label{subsec:optConfig}

Alg.~\ref{alg:configurationSelection} allows one to determine an optimal compression configuration from an input set $\mathcal{G}$ of candidate partitioning strategies (obtained from Alg.~\ref{alg:BandB}) and set $\mathcal{A}$ of compression algorithms, together with upper bounds on compression and decompression time, $T_c$ and $T_d$.  The variable $globalBestGain$ records the highest overall compression gain from the partitioning strategy/algorithm assignment combinations tested so far. Lines 2-29 iterate through each candidate partitioning strategy $G \in \mathcal{G}$; each container subset $S$ contained in $G$ is tested (Lines 4-23) to determine the compression algorithm $a \in \mathcal{A}$ that achieves the highest compression gain (Lines 6-14). Before an algorithm is assigned to a container subset, a test is performed to ensure that the required compression and decompression time values fall below the respective bounds $T_c$ and $T_d$ (Line 8).  

At the conclusion of testing, if there is no available algorithm in $\mathcal{A}$ that satisfies the time bounds for compression and decompression when applied to a specific subset $S$, the entire partitioning strategy containing $S$ is immediately disqualified (Lines 15-16).  Otherwise, the overall compression gain and compression/decompression time scores are updated for the partitioning strategy $G$, and the appropriate compression algorithm is assigned to the active subset $S$ (Lines 17-22).  After each partitioning strategy $G$ has been processed, a test is done to determine whether it yields a better gain than the current $globalBestGain$; if necessary, the globally-best compression configuration $\langle P, arg \rangle$ is updated to store the current partitioning strategy $G$, along with the optimal algorithm selection strategy $\alpha_{G}$ found for $G$ (Lines 24-28).  

After all partitions in $\mathcal{G}$ have been processed, the optimal compression configuration $\langle P, \alpha\rangle$ is returned (Line 30).    

\linesnumbered
\begin{algorithm}[t]
\SetKw{KwGoTo}{goto}
\small
\KwIn{set of compression algorithms $\mathcal{A}$, set of candidate container partitions $\mathcal{G}$, upper bound $T_c \in \mathbb{Z}^{+}$ on compression time, upper bound $T_d \in \mathbb{Z}^{+}$ on decompression time}
\KwOut{a compression configuration $\langle P, \alpha\rangle$}
$globalBestGain \leftarrow 0$; $P \leftarrow NULL$; $alg \leftarrow NULL$\;  
\ForEach{$G \in \mathcal{G}$}{
	$groupingCTime \leftarrow 0$;
	$groupingDTime \leftarrow 0$;
	$groupingGain \leftarrow 0$\;
	\ForEach{$S \in G$}{
		$maxGain \leftarrow 0$;
		$bestCTime \leftarrow 0$;
		$bestDTime \leftarrow 0$;
		$bestAlgorithm \leftarrow NULL$\;
		\ForEach{$a \in \mathcal{A}$}{
			$gain \leftarrow compressedGain(S,a)$\;
			\If{$gain > maxGain$ {\bf and} $groupingCTime + compressTime(S,a) \le T_c$ {\bf and} \hskip0.2em $groupingDTime + decompressTime(S,a) \le T_d$}{
				$bestCTime \leftarrow compressTime(S,a)$\;
				$bestDTime \leftarrow decompressTime(S,a)$\;
				$bestAlgorithm \leftarrow a$\;
				$maxGain \leftarrow gain$\;			
			}
		}
		\eIf{$bestAlgorithm = NULL$}{
			\KwGoTo line 4\;
		}{
			$groupingCTime \leftarrow groupingCTime + bestCTime$\;
			$groupingDTime \leftarrow groupingDTime + bestDTime$\;
			$groupingGain \leftarrow groupingGain + maxGain$\;
			$\alpha_{G}(S) \leftarrow bestAlgorithm$\;
		}
	}
	
	\If{$groupingGain > globalBestGain$}{
		$globalBestGain \leftarrow groupingGain$\;
		$P \leftarrow G$\;
		$\alpha \leftarrow \alpha_{G}$\;
	}
}

\Return{$\langle P,\alpha\rangle$\;}
\caption{Selecting a compression configuration.}
\label{alg:configurationSelection}
\end{algorithm}

\subsection{Discussion \& Practical Considerations}

\paragraph{Comparison with greedy algorithm.} At first glance, it might seem tempting to employ a simple greedy strategy for selecting a partitioning strategy.  In terms of the branch-and-bound search tree, this corresponds to exploring only the root-to-leaf branch formed by selecting at each level the child node with the highest local gain value.  While such a strategy is more time efficient, it is not too difficult to envision scenarios where it results in a sub-optimal partitioning strategy being chosen.  As an example, we can consider an XML document consisting of DBLP-like bibliographic entries.  At an earlier stage, both the branch-and-bound and greedy strategies may decide to place ``author'' and ``year'' containers in different subsets due to their dissimilarities. At a later stage, suppose that the container storing ``key'' values for entries is to be assigned, and further, assume that ``key'' values are formed by concatenating the author name with the year.  This quite possibly causes a higher compression gain to be obtainable by grouping ``author'', ``year'', and ``key'' containers together.  While the branch-and-bound strategy is capable of reconsidering the initial decision to separate ``author'' and ``year'' containers after observing the characteristics of future containers, the greedy strategy is not.   

\paragraph{Choosing a good $\delta$ value.} Choosing a value for $\delta$ represents a tradeoff between accuracy and running time.  Smaller $\delta$ values will cause more nodes to be ``killed'' at each level in the search tree, reducing the size of the tree which must be explored.  On the other hand, this also increases the likelihood that the true optimal configuration will not be chosen: a partitioning strategy may be discarded in phase one of the algorithm based solely on having a lower calculated local compression gain value, even though it may outperform all of the chosen candidate partitioning strategies in the second phase when a particular algorithm selection strategy is applied to it.  Conversely, choosing a large enough $\delta$ value ensures that the optimal configuration is always chosen, at the potential expense of an exhaustive enumeration of all possible combinations of container groupings and algorithm selections, requiring time exponential in $|\mathcal{C}|$.  

\paragraph{Ordering of containers.} Proposition~\ref{prop:optimalityGuarantee} illustrates that the optimal container partitioning strategy can never be ``killed'', implying that the ordering of containers does not impact the discovery of the optimal grouping.  Yet container ordering can affect the efficiency of the algorithm's first phase. In particular, sorting the containers in descending order of their sizes can dramatically reduce the number of tree nodes that are explored. Such an ordering causes a larger number of characters to have fixed container assigments at an earlier level in the tree, thereby reducing the number of unprocessed characters and allowing tighter bounds on $mpGain$ to be established more quickly.  As a result, larger subtrees can be pruned from the search tree.

\section{Conclusion}
\label{sec:conclusion}
In this paper, we demonstrated that determining an optimal configuration for permutation-based XML compression is an {\bf NP}-hard problem.  We also described an approximation algorithm that allows one, with proper selection of parameter values, to discover the optimal compression configuration in polynomial time (w.r.t. the sizes of the document and the set of compression algorithms $\mathcal{A}$).  As future work, we plan to implement this algorithm within our existing XML-conscious compressor~\cite{Leighton05a} and test its effectiveness via experimentation over a range of real-world and synthetic XML documents.

\bibliographystyle{splncs}
\bibliography{main}
\end{document}